\newtheorem{mylemma}{Lemma}
\begin{document}

\title{Scalable Data Cube Analysis over Big Data}

\author{
   \begin{tabular*}{0.7\textwidth}{@{\extracolsep{\fill}}ccc}
       Zhengkui Wang$^\ddag$ &
       Yan Chu $^\ddag$ $^*$ &  
        Kian-Lee Tan$^\ddag$ 
         \\
         \emph{wangzhengkui@nus.edu.sg} & \emph{chuyan@hrbeu.edu.cn} & \emph{tankl@comp.nus.edu.sg} \\
          Divyakant Agrawal$^\S$ &
        Amr EI Abbadi$^\S$ &
        Xiaolong Xu$^\#$ \\
        \emph{agrawal@comp.ucsb.edu} & \emph{amr@comp.ucsb.edu} & \emph{qlxucn@gmail.com}
   \end{tabular*}\\
   \\
   \begin{tabular*}{0.9\textwidth}{@{\extracolsep{\fill}}ccc}
      $^\ddag$ National University of Singapore &  $^*$ Harbin Engineering University
             \\  $^\S$ University of California at Santa Barbara & $^\#$ AppFoilo Inc. 
        \end{tabular*}
}
\maketitle

\begin{abstract}

Data cubes are widely used as a powerful tool to provide multi-dimensional views in data warehousing and On-Line Analytical Processing (OLAP). However, with increasing data sizes, it is becoming computationally
expensive to perform data cube analysis. The problem is exacerbated by
the demand of supporting more complicated aggregate functions (e.g. CORRELATION, Statistical Analysis) as well as supporting frequent view updates in data cubes. This calls for new scalable and efficient data cube analysis systems.
In this paper, we introduce {\tt HaCube},  an extension of MapReduce, designed for efficient parallel data cube analysis on large-scale data by taking advantages from both MapReduce (in terms of scalability) and parallel DBMS (in terms of efficiency).
We also provide a general data cube materialization algorithm which is able to facilitate the features in MapReduce-like systems towards an efficient data cube computation. Furthermore, we demonstrate how {\tt HaCube} supports view maintenance through either incremental computation (e.g. used for SUM or COUNT) or recomputation  (e.g. used for MEDIAN or CORRELATION).
We implement {\tt HaCube} by extending Hadoop and evaluate it based on the TPC-D benchmark over billions of tuples on a cluster with over 320 cores. The experimental results demonstrate the efficiency, scalability and practicality of {\tt HaCube} for cube analysis over a large amount of data in a distributed environment.

\end{abstract}

\section{Introduction}
In many industries, such as sales, manufacturing, transportation and finance,
there is a need to make decisions based on aggregation of data over
multiple dimensions. {\tt Data cubes} \cite{gray:datacube} are one
such critical technology that has been widely used
in data warehousing and On-Line Analytical Processing (OLAP)
for data analysis in support of decision making.

In OLAP, the attributes are classified into \textbf{dimensions} (the grouping attributes) and \textbf{measures} (the attributes which are aggregated) \cite{gray:datacube}. Given $n$ dimensions, there are a total of $2^{n}$ {\tt cuboids}, each of which captures the aggregated data over
one combination of dimensions. To speed up query processing, these cuboids are typically stored into a database as views. The problem of \textbf{data cube materialization} is to efficiently compute all the views ($\mathbb{V}$) based on the data ($\mathbb{D}$).  Fig. \ref{fig:lattice} shows all the cuboids represented as a cube lattice with 4 dimensions $A$, $B$, $C$ and $D$.

In many append-only applications (no UPDATE and DELETE operations), the new data ($\Delta \mathbb{D}$) will be incrementally INSERTed or APPENDed to the data warehouse for view update. For instance, the logs in many applications (like the social media or stocks) are incrementally generated/updated. There is a need to update the views in a manner of one-batch-per-hour/day.  The problem of \textbf{view maintenance} is to efficiently calculate the latest views while $\Delta \mathbb{D}$ are produced.

Both data cube materialization and view maintenance are computationally expensive, and have
received considerable attention in the literature \cite{xin:tkde}\cite{zhao:arraycube}\cite{lee:incremain}\cite{palpanas:increnondis}.
However, existing techniques can no longer meet the demands of
today's workloads. On the one hand, the amount of data is increasing
at a rate that existing techniques
(developed for a single server or a small
number of machines) are unable to offer acceptable performance.
On the other hand, more complex aggregate functions
(like complex statistical operations) are required
to support complex data mining and statistical analysis tasks.
Thus, this calls for new scalable systems to efficiently support data cube
analysis over a large amount of data.

Meanwhile, MapReduce (MR) \cite{dean:mapreduce} has emerged as a
powerful computation paradigm for parallel data processing on large-scale clusters.
Its high scalability and append-only features have made it
a potential target platform for data cube analysis in append-only applications.
Therefore, exploiting MR for data cube analysis has become an interesting research topic.
However, deploying an efficient data cube analysis using MR is non-trivial. A naive implementation of cube materialization and view maintenance over MR can result in high overheads.

We first summarize the main challenges for cube analysis on large-scale data for developing an efficient cube analysis system.
\begin{itemize}
  \item {
Given $n$ dimensions in one relation, there are $2^{n}$ cuboids that need
to be computed in the cube materialization. An efficient parallel
algorithm to materialize the cube faces two challenges:
(a) Given that some of the
cuboids share common dimensions,  is it possible to batch
these cuboids to exploit some common processing? (b) Assuming
we are able to create batches of cuboids, how can we allocate
these batches or resources so that the load across the processing
nodes is balanced? }

  \item {
View maintenance in a distributed environment introduces significant overheads, as large amounts of data (either the old materialized data or the old base data) needs to be read, shuffled and written among the processing nodes and
distributed file system (DFS). Moreover, more and more applications request to perform view updates more frequently than before, shifting from one-update-per-week/month to almost one-update-per-day even per hour. It is thus critical
to develop efficient view maintenance methods for frequent view updates, even realtime updates. }
\end{itemize}

\begin{figure}[t]
    \begin{minipage}{0.49\textwidth}
    \centerline{\psfig{figure=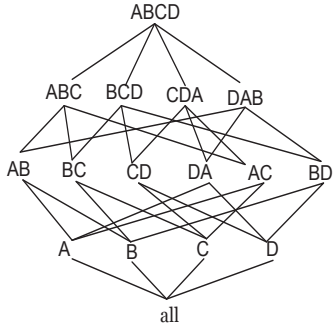,width=50mm,height=45mm} }
    \caption{A cube lattice with 4 dimensions: A, B, C and D}
    \label{fig:lattice}
    \end{minipage}
\end{figure}
Therefore, in this paper, we are motivated to explore the techniques of developing new scalable data cube analysis systems by leveraging the MR-like paradigm, as well as to develop new techniques for efficient data cube analysis to broaden the application of data cubes primarily for append-only environments.
Our main contributions are as follows:

1. \emph{New system design and implementation:} We present \emph{{\tt HaCube}}, an extension of MR, for large-scale data cube analysis. \emph{{\tt HaCube}} tries to integrate the good features from both MR and parallel DBMS. \emph{{\tt HaCube}} extends MR to better support data cube analysis by integrating new features, e.g. a new local store for data reuse among jobs, a layer with user-friendly interfaces and a new computation paradigm MMRR ({\tt MAP}-{\tt MERGE}-{\tt REDUCE}-{\tt REFRESH}). \emph{{\tt HaCube}} illustrates one way to develop a scalable and efficient decision making system, such that cube analysis can be utilized in more applications.


2. \emph{A General Cubing Algorithm:} We provide a general and efficient data cubing algorithm, {\tt CubeGen}, which is able to complete the entire cube lattice using one MR job. We show how cuboids can be batched together to minimize the read/shuffle overhead and salvage partial work done. On the basis of batch processing principle, {\tt CubeGen} further leverages the ordering property of the reducer input provided by the MR-like framework for an efficient materialization. In addition, we propose one load balancing approach, $LBCCC$ to calculate the number of computation resources (reducers) for each batch, such that the load to each reducer is balanced.

 3. {\emph{Efficient View Maintenance Mechanisms:} We demonstrate how views can be efficiently updated under {\tt HaCube} through either recomputation (e.g. used for MEDIAN or CORRELATION) or incremental computation (e.g. used for SUM or COUNT). To the best of our knowledge, this is the first work to address data cube view maintenance in MR-like systems.

%

4. \emph{Experimental Study:} We evaluate {\tt HaCube} based on the TPC-D benchmark with more than two billions tuples. The experimental results show that {\tt HaCube} has significant performance improvement over MR.

The rest of the paper is organized as follows. Section \ref{sec:background} provides the preliminary knowledge of MR computation paradigm. In Section \ref{sec:overview}, we provide an overview of {\tt HaCube}.
Sections \ref{sec:viewmaterialization} and \ref{sec:incrementalvm} present
our proposed cube materialization and view maintenance approaches.
We discuss the fault tolerance and other issues in Section \ref{sec:faulttolerance}, and report our experimental results in Section \ref{sec:experiment}.
In Section \ref{sec:relatedwork} and Section \ref{sec:conclusion}, we review some related works and conclude the paper.

\section{Preliminaries}
\label{sec:background}

In this section, we provide a brief overview of the MR computation paradigm.
MapReduce has emerged as a powerful parallel computation paradigm \cite{dean:mapreduce}. It has been widely used by various applications such as scientific data processing \cite{wang:cosac}\cite{wang:ceo}\cite{wang:eceo}, text processing \cite{lin:text}\cite{sadasivam:sequence}, data mining \cite{chang:media} \cite{zhao:kmeans} and machine learning \cite{chu:machinelearning} \cite{gillick:ml} and so on. MapReduce has several advantages which make it an attractive platform for large-scale data processing, such as its high scalability (scalability of thousands of machines), good fault tolerance (automatic failure recovery by the framework), ease-of-programming (simple programming logic) and high integration with cloud(availability to every user and low expense in a pay-as-you-go cloud model).

 Under the MapReduce framework, the system architecture of a cluster consists of two kinds of nodes, namely, the NameNode and DataNodes. The NameNode works as a master of the file system, and is responsible for splitting data into blocks and distributing the blocks to the data nodes (DataNodes) with replication for fault tolerance. A JobTracker running on the NameNode keeps track of the job information, job execution and fault tolerance of jobs executing in the cluster. A job may be split into multiple tasks, each of which is assigned to be processed at a DataNode.

The DataNode is responsible for storing the data blocks assigned by the NameNode. A TaskTracker running on the DataNode is responsible for the task execution and communicating with the JobTracker.

\begin{figure*}[htb]
    \centerline{\psfig{figure=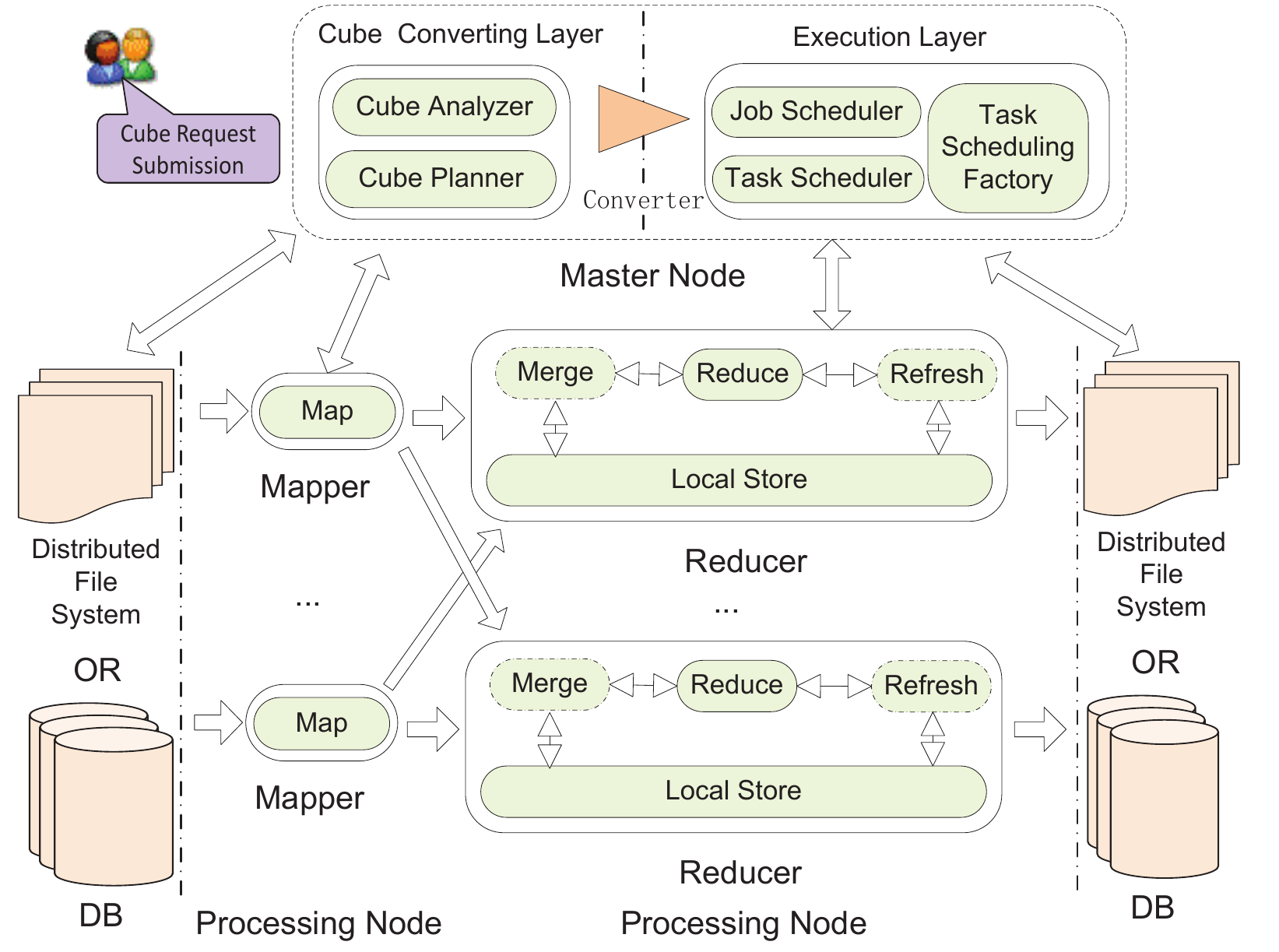,width=130mm,height=80mm} }
    \caption{HaCube Architecture}
    \label{fig:architecture}
\end{figure*}

The computation of MR follows a fixed model with a {\tt map}
phase, followed by a {\tt reduce} phase. Users can set their own computation logic by writing the {\tt map} and {\tt reduce} functions in their applications.

\textbf{\emph{Map Phase:}} The MR library is
responsible for splitting the data into chunks from the distributed system (DFS) and distributing each
chunk to a processing unit (called {\tt mapper}) on different
nodes. The {\tt map} function is used to process
$(key,value)$ pairs $(k1,v1)$ read from data chunks and, after applying the {\tt map} function, then emits a new set of intermediate $(k2,v2)$ pairs.

The MR library sorts and partitions all the intermediate data into $r$ partitions based on the {\tt partitioning} function in each mapper, where $r$ is the number of processing units (called {\tt reducers}) for further computation. The partitions with the same partition number are shuffled to the same reducer. We note that MR randomly chooses the free reducer to process the partitions.

  \textbf{\emph{Reduce Phase:}} The MR library merge-sorts all the $(key,$ $value)$ pairs based on the key first.  Then, the globally sorted data are supplied to the {\tt reduce} function iteratively. After the reduce process, the reducer emits new $(k3,v3)$ pairs to the DFS.

When one job finishes, all intermediate data are removed from the {\tt mappers} and {\tt reducers}. If another job wants to use the same data, it has to reload the data from the DFS again.

\section{HaCube: The Big Picture}
\label{sec:overview}

\subsection{Architecture}

Figure \ref{fig:architecture} gives an overview of the basic architecture of \emph{{\tt HaCube}}. We implement {\tt HaCube} by modifying Hadoop which is an open source equivalent implementation of MR \cite{hadoop}. Similar to MR, all the nodes in the cluster are divided into two different types of function nodes, including the master and processing nodes. The master node is the controller of the whole system and the processing nodes are used for storage as well as computation.

\textbf{\emph{Master Node:}} The master node consists of two
functional layers:

1. The \textbf{cube converting layer} contains two main components: {\tt Cube Analyzer} and {\tt Cube Planner}. The cube analyzer is designed to accept the user request of data cube analysis, analyze the cube, such as figuring
out the cube id (the identifier of the cube analysis application),
analysis model (materialization or view update), measure operators (aggregation function),
and input and output paths etc.

The cube planner is developed to convert the cube analysis request into an execution job (either a materialization job or a view update job). The execution job is divided into multiple tasks each of which handles part of
the cuboid calculation. The cube planner consists of several functional components such as the execution plan generator (combine the cuboids into batches to reduce the overhead),
and load balancer
(assign the right number of computation resources for each batch).

2. The \textbf{execution layer} is responsible for managing the execution
of jobs passed from the
cube converting layer. It has three main components: {\tt job scheduler}, {\tt task scheduler} and {\tt task} {\tt scheduling} {\tt factory}. We use the same job scheduler as in Hadoop which is used to schedule different jobs from different users. In addition, we add a task scheduling factory which is used to record the task scheduling information of a job which can be reused in other jobs.
Furthermore, we develop a new task scheduler to schedule the tasks in terms of the scheduling history stored in the task scheduling factory rather than the random scheduler used in MR.


\textbf{\emph{Processing Node:}}
A processing node is responsible for the task execution assigned from the master node. Similar to MR, each processing node contains one or more processing units each of which can either be a {\tt mapper} or a {\tt reducer}. Each processing node has a {\tt TaskTracker} which is in charge of communicating with the master node through heartbeats, reporting its status, receiving the task, reporting the task execution progress and so on. Unlike MR, there is a {\tt Local Store} built at
each processing node running {\tt reducers}. The local store is developed to cache useful data of a job in the local file system of the reducer node. It is a persistent storage in the local file system and will not be deleted after a job execution.
In this way, tasks (possibly from other jobs) assigned to the
same reducer node are able to
access the local store directly from the local file system.

\subsection{Computation Paradigm}

{\tt HaCube} inherits some features from MR,
such as data read/proce- ss/write format of (key, value) pairs, sorting
all the intermediate data and so on. However, it further enhances MR to support
 a new computational paradigm.
{\tt HaCube} adds two optional phases
- a {\tt Merge} phase and a {\tt Refresh} phase
before and after the {\tt Reduce} phase - to support the
MAP-MERGE-REDUCE-REFRESH (MMRR) paradigm as shown
in Fig. \ref{fig:architecture}.

The {\tt Merge} phase has two functionalities. First, it is used to cache the data from the reduce input to the local store. Second, it is developed to sort and merge the partitions from mappers with the cached data in the local store. The {\tt Refresh} phase is developed to perform further computations based on the reduce output data. Its functionalities include caching the reduce output data to the local store and refreshing the reduce output data with the cached data in the local store. These two additional phases are intended to fit different application requirements for efficient execution support.

As mentioned, these two phases are optional for the jobs.
Users can choose to use the original MR computation or
MMRR computation. More details can be found in Section \ref{sec:incrementalvm} about how MMRR benefits the data cube view maintenance.

\section{Cube Materialization}
\label{sec:viewmaterialization}

In this section, we provide our proposed data cubing algorithm, {\tt CubeGen}, under the MR-like systems. We first present some principles of sharing computation through cuboid batching followed by a batched generator. We then introduce the load balancing strategy followed by the detail implementation of {\tt CubeGen}.
For simplicity, we assume that we are materializing the complete cube. Note that our techniques can be easily generalized to compute a partial cube
(compute only selective cuboids). We also omit the cuboid ``all" from the lattice. This special cuboid can be easily handled through an independent processing unit.

\subsection{Cuboid Computation Sharing}

To build the cube, computing each cuboid independently
is clearly inefficient. A more efficient solution, which
we advocate, is to combine cuboids into batches so that
intermediate data and computation can be shared and salvaged.

We provide the following lemma as a formal basis for combining and batching the cuboids computation under MR-like systems.

\begin{mylemma}
\label{lemma:batch}
Let $A$ and $B$ be a set of dimensions such that
$A \bigcap B = \emptyset$.
In MR-like systems, given cuboids $A$ and $AB$,
$A$ can be combined and processed together with $AB$,
once $AB$ is set of the key and is partitioned by $A$ in one MR job.
$A$ is referred to as the \textbf{ancestor} of $AB$ (denoted as $A \prec AB$). Meanwhile, $AB$ is called the \textbf{descendant} of $A$.
Note that the ancestor and descendant require them share the same prefix.
\end{mylemma}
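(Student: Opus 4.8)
The plan is to verify the claim by tracing the two cuboids through the map, shuffle/sort, and reduce stages of a single MR job, and then showing that the streaming semantics of the reducer allow both aggregates to be emitted in one scan of the reducer input.

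First I would fix the map output: for each base tuple the mapper emits a single pair whose key is the full dimension combination $AB$ (with $A$ as the leading component) and whose value carries the measure to be aggregated. I would then isolate the two non-default design choices the lemma prescribes, since they are what make the sharing work: the \emph{key} is the finer grouping $AB$, but the \emph{partitioning} function is computed on the prefix $A$ alone rather than on the whole key (the MR default). Everything downstream is a consequence of separating these two roles.

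Second I would establish the two structural properties that the shuffle/sort phase guarantees under these choices. Completeness for the ancestor: because partitioning is on $A$, every tuple whose projection onto $A$ equals a fixed value is routed to the same reducer regardless of its $B$ value, so no $A$-group is ever split across reducers, which is exactly the condition needed to aggregate $A$ correctly on one node. Contiguity: the framework merge-sorts the reducer input on the full key $AB$, and since $A$ is the leading component and $A \cap B = \emptyset$, this sort orders records primarily by $A$ and secondarily by $B$; hence all records sharing a common $A$ value form a contiguous run, and within it the records sharing a common $AB$ value form contiguous sub-runs. With these two properties in hand I would exhibit the single-scan reducer that exploits them: keep one accumulator for the current $AB$ key and a second for the current $A$ prefix, emit the descendant aggregate whenever the $AB$ key changes and the ancestor aggregate whenever the $A$ prefix changes, resetting as needed. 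Contiguity guarantees each group closes exactly when its run ends, and completeness guarantees the $A$-accumulator has seen every contributing tuple, so both emitted results are correct and the data is read and shuffled only once.

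The main obstacle is the contiguity argument, and it is precisely why the lemma insists that the ancestor be a prefix of the descendant. The difficulty is that default MR partitions and groups on the whole key, so I must argue that decoupling partitioning (on $A$) from sorting (on $AB$) is both legitimate and sufficient, and then show that prefix-sorting is what renders an $A$-group's tuples adjacent. Were $A$ not a prefix of the sort key, the records belonging to one $A$-group would be interleaved with foreign records, and no single streaming pass could aggregate them without unbounded buffering; so the shared-prefix condition is not a mere convenience but the crux of the correctness proof.
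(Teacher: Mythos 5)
Your proof is correct and takes essentially the same route as the paper's: set the mapper key to $AB$ with $A$ as the leading prefix, partition on $A$ so every $A$-group is routed to a single reducer, and rely on the framework's sort on the full key $AB$ so that each $A$-group arrives contiguous and both aggregates can be computed while $AB$ is processed. You make explicit what the paper leaves implicit---the two-accumulator single-scan reducer and why the shared-prefix condition is essential for contiguity---but the underlying argument is identical.
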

\begin{proof}
Without loss of generality, we assume $A=d_{1},$ ...,  $d_{x}$ and $B=d_{y},$ ..., $d_{y+z}$ where $d_{i}$ is one dimension. When $AB$ is processed,
the output key in the mapper can be set as the group-by dimensions of $AB$ ($d_{1},$ $...,$ $d_{x},$ $d_{y},$ $...,$ $d_{y+z}$) on which the (k,v) pairs are further sorted based on $AB$. Partitioning $AB$
based on $A$ guarantees that all the same group-by cells in $A$ are shuffled to the same reducer where $A$ is also in a sorted order. Therefore, we can process $A$ as well while $AB$ is processed. 
\end{proof}

The above results can be generalized using transitivity:
Since we can combine the processing of the pair of cuboids $\{A, AB \}$
and the pair $\{ AB, ABC \}$, we can also combine the processing of the
three cuboids $\{ A,$ $AB,$ $ABC \}$.
Thus, given one cuboid, all its ancestors can be calculated together as a batch. For instance, in Fig. \ref{fig:lattice}, as $A \prec AB \prec ABC \prec ABCD$,
the cuboids $A$, $AB$, $ABC$
can be processed with $ABCD$. Note that $BC$ cannot be
processed with $ABCD$ because $BC \nprec ABCD$.

Given a batch, the principle to calculate this batch is to
set the \emph{sort dimensions} as the key and partition the (k,v)
pairs based on the \emph{partition dimensions} in the key in the MR-like paradigm.
We formally define these two dimension classes below: \\
\textbf{Definition 1} \textbf{\emph{Sort Dimensions:}} \emph{The dimensions
in cuboid A are called the sort dimensions
if A is the descendant of all other cuboids in one batch.} \\
\textbf{Definition 2} \textbf{\emph{Partition Dimensions:}} \emph{The dimensions in cuboid A are
called the partition dimensions if A is the ancestors of all
other cuboids in one batch.}

For instance, given the batch $ \{A, AB, ABC, ABCD \}$, $ABCD$ and $A$ can be set as the sort and partition dimensions respectively.

The benefits of this approach are: 1) In the reduce phase, the
group-by dimensions are all in sorted order for every cuboid
in the batch, since MR would sort the data before
supplying to the reduce function. This is an efficient way of
cube computation since it obtains sorting for free and no
other extra sorting is needed before aggregation. 2) All the
ancestors do not need to shuffle their own intermediate
data but use their descendant's. This would significantly reduce
the intermediate data size, and thus remove a lot of data
sort/partition/shuffle overheads.

%

To achieve good performance, we need to address two issues.
First, how can we find the minimum number of batches from the $2^{n}$
cuboids? As more cuboids are combined together, the
shuffling overhead incurred for data shuffling will be reduced.
Second, how can we balance the load to assign the right number of
computation resources to each batch?  As different batches may have
different computation complexity and data size, it is not optimal
to evenly assign the computation resources to each batch.
Before providing the detailed
algorithm for {\tt CubeGen}, we first introduce how it solves
the aforementioned challenges by developing a {\tt plan generator}
and a {\tt load balancer}.

\subsection{Plan Generator}

The goal of the plan generator is to generate the minimum number of batches
among the $2^{n}$-$1$ cuboids, excluding $``all''$. The plan generator first divides the
$2^n$-$1$ cuboids into $n$ groups each of which consists of the
cuboids with $i$ dimension attributes. For instance, given the
cube lattice with 4 dimension attributes in Figure \ref{fig:lattice},
it can be divided into 4 groups (from the bottom of the lattice to the top)
as follows: $G_{1}=\{A, B, C, D\}$, $G_{2}=\{AB, BC, CD, DA, AC, BD\}$, $G_{3}=\{ABC, BCD, CDA,$ $DAB\}$, $G_{4}=\{ABCD\}$.

Recall that one cuboid can be batched with all its sub-cuboids.
Thus, we adopt a \emph{greedy approach} to combine one cuboid with
as many of its ancestors as possible. Initially, all the cuboids in each group are marked as available.
Each construction of a batch starts with one available cuboid,
$\alpha$, from the non-empty group with the maximum number of dimensions.
It then searches all the available ancestors of $\alpha$ from other groups
that can be batched together. For instance,
the first batch construction starts with $ABCD$ in the example above (Since
$ABCD$ has 4 dimensions, it is the one with maximum number of dimensions).
Note that since cuboid $\alpha$ has different permutations
(e.g. \emph{ABCD} can also be permuted as $ABDC$, $ACBD$, $BCDA$, $CDAB$,
$DABC$ etc.), the algorithm enumerates all permutations and
the one with the maximum number of available ancestors will be chosen.
Once one batch is constructed, all the cuboids in this batch are
deleted from the search space and become unavailable. Similarly, the next batch
construction is conducted among the remaining available cuboids. The construction finishes when there are no available cuboids left.


The approach we adopt to generate the batches is similar to the one proposed in \cite{lee:incremain}. Lee et al. provide an extensive proof that the algorithm is able to generate $C_{n}^{\lceil \frac{n}{2}  \rceil}$ batches which is the minimum number. Recall that there are $C_{n}^{\lceil \frac{n}{2}  \rceil}$ cuboids in
group $G_{\lceil \frac{n}{2}  \rceil}$
and that none
of them can be combined with each other.
So there are at least $C_{n}^{\lceil \frac{n}{2}  \rceil}$ batches.
Interested
readers are referred to \cite{lee:incremain} for more details.

\begin{figure}[t]
    \centerline{\psfig{figure=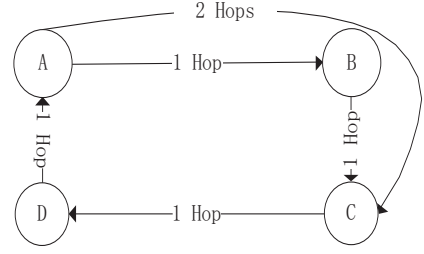,width=55mm,height=35mm} }
    \caption{A directed graph of expressing 4 dimensions $A$, $B$, $C$ and $D$}
    \label{fig:orderedlattice}
\end{figure}

To improve the efficiency of batch construction,
two optimizations are adopted to reduce the search space.
\begin{itemize}
  \item First, recall that for a set of dimensions, we need to compute a batch
for each permutation. This, however, may not be necessary.
In fact, when all the sub-cuboids of a particular permutation are available,
we know that we have found a permutation with the maximum number of
sub-cuboids. Therefore, as soon as we encounter such a permutation,
we do not need to continue the search for this set of dimensions.
  \item Second, we organize all the dimensions as one directed graph such that one dimension points to another and we refer to the distance between two adjacent dimensions as one hop. For instance, given 4 dimensions $A$, $B$, $C$ and $D$, they can be expressed as a directed graph such that $A$, $B$, $C$ and $D$ point to $B$, $C$, $D$ and $A$ respectively as shown in Figure \ref{fig:orderedlattice} (b). During permutation enumeration, changing from $A$ to $B$ or $C$ is referred as moving one hop or two hops from $A$.

      To find the permutation of a cuboid $\alpha$ with the maximum number of sub$\_$cuboids, the enumeration starts from the permutation that is obtained by moving the equivalent number of hops for each dimension of the unavailable cuboid in the same group.
      This is to guarantee that, most likely, the first search permutation is the one we need to reduce the search space. For instance, assume that the first batch is generated as $ABCD$, $ABC$, $AB$ and $A$. Then the next initial permutation for a new batch is $BCD$ which is computed through moving one hop for each dimension in the unavailable cuboid $ABC$. It is clear that the new batch can be generated with $BCD$ (since all its sub$\_$cuboids are available) and there is no need to search other permutation of $BCD$.

      In the same way, the batches of $CDA$ and $DAB$ will be generated.  These two optimizations speed up the batch construction.

\end{itemize}

Figure \ref{fig:orderedlattice} shows an example of the generated batches using the dotted lines.


\begin{figure}[t]
     \centerline{\psfig{figure=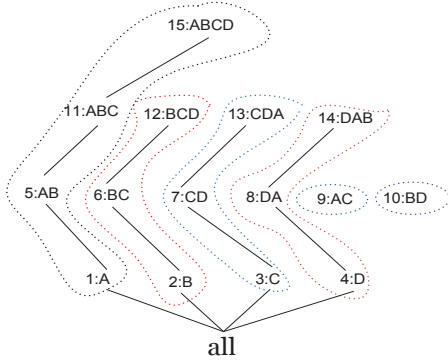,width=60mm,height=50mm} }
    \caption{The numbered cube lattice with execution batches}
    \label{fig:orderedlattice}
\end{figure}


\subsection{Load Balancer}

Given a set of batches from the plan generator, the load balancer is used to assign the right number of computation resources to each batch to balance the load.

We argue that existing works (like \cite{ng:iceberg}) that balance the
batches by evenly assigning the computation resources may not always be
a good choice.
\begin{itemize}
  \item First, it requires users to provide very specific information
about the application data to be able to estimate a
model to find the balanced batches.
  \item Second, the cuboids may not be
combined into balanced batches.
  \item  Third, in a MR-like system,
it is hard to make a precise cost estimate of each batch.
For instance, the total cost of each batch includes the following main parts:
the data shuffling cost (shuffling the intermediate data from mappers to
reducers), the sorting cost (all the intermediate data are sorted),
data processing cost (applying the measure function to each cuboid
in a batch) and data writing cost (writing the views to the file system).
It is hard to estimate each of these component costs and even
harder to evaluate the total cost of each batch (as this requires
setting the appropriate weights when combining these components).
\end{itemize}

Therefore,
the load balancing among different batches becomes a very tricky and
challenging problem in MR-like systems.

In this paper, we propose a novel load balancing scheme
$LBCCC$ (short for Load Balancing via Computation Complexity Comparison)
to assign the right number of computation resources to each batch. Intuitively,
$LBCCC$ adopts a profiling based approach where a learning job (we refer as $CCC$ - Computation Complexity Comparison job ) is first conducted on a
small test dataset to evaluate the computation overhead relationship between each batch and then generate the number
of reducers for each batch that are proportional to computation overhead for the actual {\tt CubeGen} jobs.
The computation cost relationship is estimated through the execution comparisons when each
batch is provided the same number of computation resources.
The execution time relationship over the same computation
resource indicates the entire batch processing overhead relationship, thus
helps to make an accurate load
balancing decision.

In particular, $LBCCC$ first conducts the $CCC$ job, a cube materialization learning job, on a small test dataset
where each batch is assigned to one reducer. It then records and utilizes the execution time of each batch to
estimate the computation overhead relationship among different batches.

The test data can be obtained either by sampling or produced within a time window provided by users. Note that the sampling can be accomplished during the $CCC$ job, since the MR framework provides
APIs for sampling data directly. Therefore, by default, we use the sampling approach provided by the Hadoop API where one tuple is sampled from every $s$ records. Users can also plug in their own sampling algorithm easily.
The sampling algorithms have been widely studied in the literature. Since it is not our focus
on studying how to choose or design a good sampling algorithm, we shall not discuss it here and more sampling algorithms can be found in  \cite{gemulla:sampling}.

In the $CCC$ job, given a set of base data, the {\tt mapper} conducts sampling on it.
For each sampling tuple, the {\tt mapper}
emits multiple (key,value) pairs each of which is for one batch.
Then the $CCC$ job shuffles the pairs that belong to the same
batch to one particular reducer. Given $b$ batches,
the $CCC$ job uses $b$ reducers each of which is in charge of
processing one batch. The implementation of the $CCC$ algorithm is
similar to the {\tt CubeGen} algorithm provided in Algorithm \ref{algo:cubegen} except for the number of reducers assigned to each batch. The algorithm detail can be found in section \ref{subsec:cubegenimplementation}.

The $CCC$ learning job records the execution time $T_{i}$
for processing batch $B_{i}$.
Based on the execution time recorded, the load balancer generates the resources assignment plan for each
batch. Given $r$ reducers, the right number of
reducers, $R_{i}$ for batch $B_{i}$ can be calculated as follows:
    \begin{center}
    $R_{i}=\frac{T_{i}*r}{\sum_{j=0}^{b-1}{T_{j}}}$
    \end{center}

The load balancer integrates this plan into the {\tt CubeGen} algorithm to balance the load.
The experimental results show that $LBCCC$ is able to balance the load very well.

We note that this evaluation only needs to be done once before the initial cube materialization and is used for subsequent jobs in the same application.
Furthermore, performing the $CCC$ job is cheap as only $b$ reducers are needed.
Note that the load balancer can support different kinds of batching approaches.
Therefore, the $LBCCC$ load balancing scheme is general and effective for different cubing algorithms.
\begin{algorithm} [t] \small
\caption{\textbf{\small CubeGen Algorithm} }
\label{algo:cubegen}
Function: Map(t)
\\ \nl $\#$  t is the tuple value from the raw data
\\ \nl Let $\mathbb{B}$ (resp. $I_{i}$) be the batch set with $B_{0}$, $B_{1}$, ..., $B_{b-1}$ (resp. the identifier of batch $B_{i}$)
\\ \nl \For{each $B_{i}$ in $\mathbb{B}$}
{ \nl k (resp. v) $\Leftarrow$ get sort dimensions (resp. the measure m) in $B_{i}$ from t
\\ \nl $\#$ If there are multiple measures (e.g. $m_{1},m_{2}$), then v $\Leftarrow$ $(m_{1}, m_{2})$
\\ \nl v.append($I_{i}$); emit(k,v);
}

Function: Partitioning($k, v$)
\\ \nl Let $R_{i}$ (resp. $attr$) be the number of reducers (resp. the partition dimensions) for $B_{i}$
\\ \nl $S_{i}$ $\Leftarrow$ $\sum_{j=0}^{i-1}{R_{j}}$
\\ \nl return $S_{i}$ + $hash(attr, R_{i})$;

Function: Reduce/Combine ($k,\{v_{1},v_{2},...,v_{m} \}$)
\\ \nl Let $\mathbb{C}$ (resp. $\mathbb{M}$) be the cuboid set in the batch identifier (resp. the aggregate function)
\\ \nl \For  {$C_{i}$ in $\mathbb{C}$}
{
\nl \If {$C_{i}$ is ready}
{
 \nl  $k''$ (resp. $v''$)$\Leftarrow$ get the group-by dimensions in $C_{i}$ (resp. $ \mathbb{M}(v_{1}, ..., v_{m}, v_{1}^{'}, ..., v_{k}^{'},...) $)
\\ \nl $\#$ Perform multiple aggregate functions e.g. $(\mathbb{M}_{1}, \mathbb{M}_{2})$ here: $v_{1}''$  $\Leftarrow$ $ \mathbb{M}_{1}(v_{1}, ..., v_{m}, v_{1}^{'}, ..., v_{k}^{'},...) $ and $v_{2}''$  $\Leftarrow$ $ \mathbb{M}_{2}(v_{1}, ..., v_{m}, v_{1}^{'}, ..., v_{k}^{'},...) $
\\ \nl emit($k'',v''$);
}
\nl \Else
{
\nl Buffer the measure for aggregation
}
}
\end{algorithm}

\subsection{Implementation of CubeGen}
\label{subsec:cubegenimplementation}

Consider the batch plan $B$ ($B_{0}$, $B_{1}$, ..., $B_{b-1}$) generated from the plan generator and the load balancing plan R=($R_{0}$, $R_{1}$, ..., $R_{b-1}$) where the $R_i$ is the number of reducers assigned for batch $B_i$. Given $B$ and $R$, the proposed {\tt CubeGen} algorithm materializes the entire cube in one job and its pseudo-code is provided in Algorithm \ref{algo:cubegen}.

\textbf{Map Phase:} The base data is split into different chunks each of which is processed by one mapper. {\tt CubeGen} parses each tuple and emits multiple (k,v) pairs each of which is for one batch (lines 3-6).
The sort dimensions in the batch are set as the key and the measure is set as the value.

To distinguish which (k,v) pair is for which batch with which cuboids, we add a batch identifier appended after the value.
The identifier is developed as one Bitmap with $2^{n}$ bits where $n$ is the number of dimensions and each bit corresponds to one cuboid. First, we number all the $2^{n}$ cuboids from $0$ to $2^{n}$-1. Second, if the cuboid is included in one batch, its corresponding bit is set as 1, otherwise 0. For instance, Fig. \ref{fig:orderedlattice} depicts an example of a numbered cube lattice. Assume that $B_{0}$ consists of cuboids $\{A, AB, ABC$ and $ABCD \}$. The identifier for $B_{0}$ is set as $ `10001000$ $00100010 '$.

The partitioning function partitions the pairs to the appropriate partition based on the identifier and the load balancing plan $R$. {\tt CubeGen} first schedules the data into the right range of reducers. Recall that the batch $B_{i}$ is assigned $R_{i}$ reducers. Therefore, the assigned reducers for batch $B_{i}$ are from $\sum_{j=0}^{i-1}{R_{j}}$ to $\sum_{j=0}^{i-1}{R_{j}}$+$R_{i}$-1. Then the (k,v) pairs are hash partitioned among these $R_{i}$ reducers according to the partition dimensions in the key (lines 7-9).

\textbf{Reduce Phase:} In the {\tt Reduce} phase, the MR library sorts all the (k,v) pairs based on the key and passes them to the reduce function. Each reducer obtains its computation tasks (the cuboids in the batch) by parsing the batch identifier in the value. The reduce function
extracts the measure and projects the group-by dimensions for each cuboid in the batch. For the descendant cuboid, the aggregation can be performed directly based on input tuple, since each input tuple is one complete group-by cell. For other cuboids, the measures of the group-by cell are buffered until the cell receives all the measures it needs for aggregation (lines 11-17). We develop multiple file emitters to write different aggregated results to different destinations.

%

Note that if the (k,v) pairs can be pre-aggregated in the {\tt map} phase, users can specify a combine function to conduct a first round aggregation. The combine function is normally similar to the reduce function as shown in lines 10-17, but only aggregates the pairs with the same key. This pre-aggregation is able to reduce the data shuffle size between mappers to reducers.
%

We emphasize that if there are muliple measures (e.g. $m_1$, $m_2$, ..., $m_n$) and multiple aggregate functions ($\mathbb{M}_1$, $\mathbb{M}_2$, .., $\mathbb{M}_m$), they can be processed in the same MR job as shown in the line 5 and 14 in Algorithm \ref{algo:cubegen}. Compared to the naive solution, {\tt CubeGen} minimizes the cube materialization overheads by sharing the data read/shuffle/compu- tation to the maximum, which obtains significant performance improvement as we shall see in Section \ref{sec:experiment}.

\section{View Maintenance}
\label{sec:incrementalvm}

There are two different manners to update the views, namely recomputation and incremental computation. Recomputation computes the latest views by reconstructing the cube based on the entire base data $\mathbb{D}$ and $\Delta \mathbb{D}$. In append-only applications, this manner is normally used for the holistic aggregate functions, e.g. STDDEV, MEDIAN, CORRELATION and REGRESSION \cite{gray:datacube}.

Incremental computation, on the other hand, updates the views using only $\mathbb{V}$ and $ \Delta \mathbb{D}$ in two steps: 1.) In the propagate step,
a delta view $\Delta \mathbb{V}$ is calculated based on the $\Delta \mathbb{D}$. 2.) In the refresh step, the latest view is obtained by merging $\mathbb{V}$ and $\Delta \mathbb{V}$ without visiting $\mathbb{D}$  \cite{mumick:maintenance}.
In append-only applications, this manner is normally used for the distributive and algebraic aggregate functions, e.g. SUM, COUNT, MIN, MAX and AVG \cite{gray:datacube}. Note that the update for these functions can also be conducted through recomputation.

 \begin{algorithm}[t]
\caption{\textbf{ A Refresh Job in MR}}
\label{algo:refreshmr}
Function: Map(t)
\\ \nl $\#$  t is the tuple value from either $\mathbb{V}$ or $\Delta \mathbb{V}$
\\ \nl k (resp. v) $\Leftarrow$ get dimensions (resp. aggregate value) from t;
\\ \nl emit(k,v)

Function: Reduce(k, $\{v_1, v_2\}$)
\\ \nl emit$(k,\mathbb{M}(v_{1},v_{2}))$
\end{algorithm}

\begin{figure*}[t]
 \begin{minipage}[t] {0.5\textwidth}
 \centering
  \epsfig{file=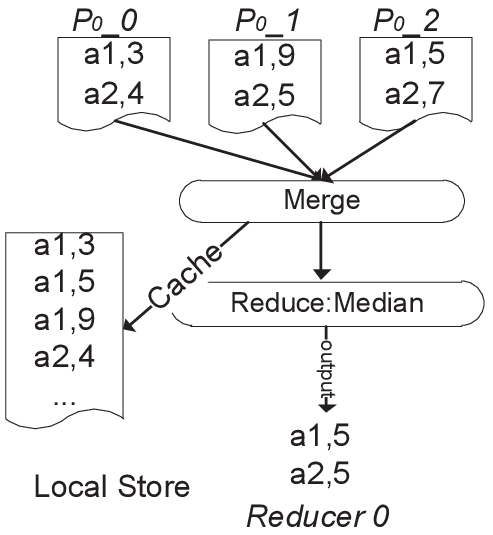,height=7cm, width=5.5cm}\\

      (a) MEDIAN: Cube Materialization (Map-Merge-Reduce)
       \end{minipage}
     \begin{minipage}[t] {0.5\textwidth}
 \centering
  \epsfig{file=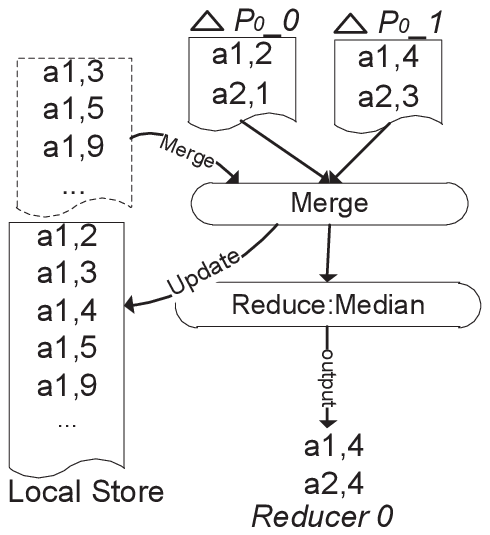,height=7cm, width=5.5cm}\\
      (b) MEDIAN: View Maintenance (Map-Merge-Reduce)
       \end{minipage}
       \caption{Recomputation for MEDIAN in HaCube}
    \label{fig:recomputation}
\end{figure*}

\subsection{Supporting View Maintenance in MR}
\label{subsec:vminmr}

To support recomputation in MR, when $\Delta \mathbb{D}$ is inserted, the latest views can be calculated by issuing one MR job using our {\tt CubeGen} algorithm to reconstruct the cube over $\mathbb{D} \cup \Delta \mathbb{D}$.
The key problem with such a MR-based recomputation view updates is that reconstruction
from scratch in MR is expensive. This is because the base data (which is
large and increases in size at each update) has to be reloaded to the mappers from DFS and shuffled to
the reducers for each view update, which incur significant overheads.

To support incremental computation in MR, the latest views can be calculated by issuing two MR jobs. The first propagate job generates
$\Delta \mathbb{V}$ from $\Delta \mathbb{D}$ using our proposed {\tt CubeGen} algorithm.
The second refresh job merges $\mathbb{V}$ and $\Delta \mathbb{V}$ as shown in Algorithm \ref{algo:refreshmr}.
However, this would incur significant overheads. For instance, the materialized $\Delta \mathbb{V}$ from the
propagate job has to be written back to DFS, reloaded from DFS again and shuffled from mappers to reducers in the refresh job. Likewise, $\mathbb{V}$ has to be reloaded and shuffled around in the refresh job. Therefore, it is highly expensive to support view update operations directly over the traditional MR.

\subsection{HaCube Design Principles}

{\tt HaCube} avoids the aforementioned overheads through storing and reusing the data between different jobs. We extend MR to add a local store in the reducer node which is intended to store useful data of a job in the local file system. Thus, the task shuffled to the same reducer is able to reuse the data already stored there. In this way, the data can be read directly
from the local store (and thus
significantly reducing the overhead that would have been
incurred to read the data from DFS and shuffle them from mappers).

We further extend MR to develop a new task scheduler to guarantee that the same task is assigned to the same reducer node and thus the cached data can be reused among different jobs. Specifically, the task scheduler records the scheduling information by storing a mapping between the data partition number (corresponds to the task) and the TaskTracker (corresponds to the reducer node) and puts it to the task scheduling factory from one job. When a new job is triggered to use the scheduling history from previous jobs, the task scheduler fetches and adopts the scheduling information from the factory to distribute the tasks. The scheduler automatically checks the situation of the over-loaded nodes and re-assigns the task to a nearby processing node.

In addition, two computation phases ({\tt Merge} and {\tt Refresh}) are added to
conduct more computation with the
cached data locally. The {\tt Merge}
phase is added to either cache the intermediate reduce input data in one job or preprocess the data between the
newly arriving data and cached data before the {\tt Reduce} phase. The
{\tt Refresh} phase is added to either cache the reduce output data in one job or postprocess the
reduce output result with the cached data after the {\tt Reduce} phase.

\begin{figure*}[t]
      \begin{minipage}[t] {0.5\textwidth}
 \centering
 \epsfig{file=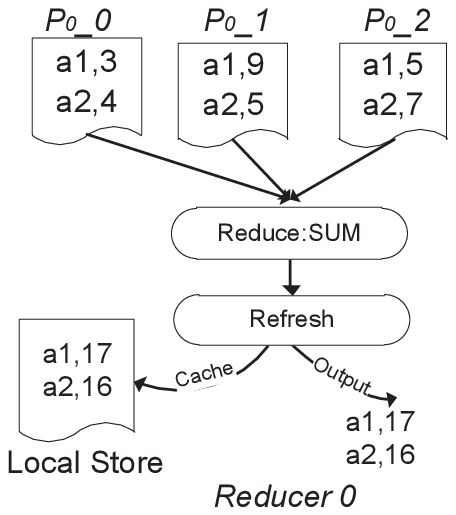,height=7cm, width=5.5cm}\\
      (a) SUM: Cube Materialization (Map-Reduce-Refresh)
       \end{minipage}
     \begin{minipage}[t] {0.5\textwidth}
 \centering
  \epsfig{file=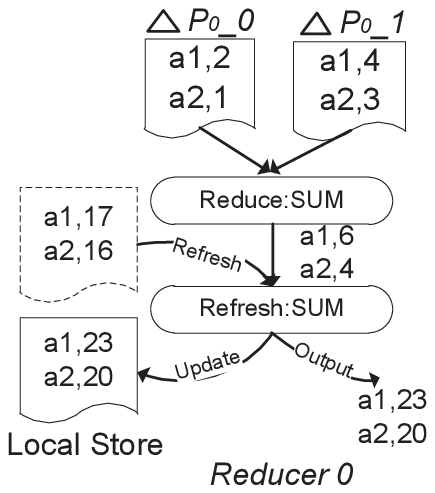,height=7cm, width=6.5cm}\\
      (b) SUM: View Maintenance (Map-Reduce-Refresh)
       \end{minipage}
       \caption{Incremental Computation for SUM in HaCube}
    \label{fig:incomputation}
\end{figure*}
\subsection{Supporting View Maintenance in HaCube}
\label{subsec:nondisvm}

\subsubsection{Recomputation}
The recomputation view update can be efficiently supported in
{\tt HaCube} using a Map-Merge-Reduce (MMR) paradigm.
We demonstrate this procedure through one running example by introducing the cube materialization and update jobs.

In the first cube materialization job, {\tt HaCube} is triggered to cache the intermediate reduce input data to the local store in the {\tt Merge} phase, such that this data can be reused during the view update job. For instance, Fig. \ref{fig:recomputation}(a) shows an example of calculating the cuboid $A$ for MEDIAN.  Assume that reducer 0 is assigned to process cuboid $A$. In this job, each mapper emits one sorted
partition for reducer 0, such as $P_{0}\_0, P_{0}\_1$ and $P_{0}\_2$. Here, each partition is a sequence of
(dimension-value, measure-value) pairs, e.g., (a1, 3), (a2, 4). Recall that once these partitions are shuffled to the reducer 0, it first performs a merge-sort (the same as MR does) to sort all the partitions based on the key in the
 {\tt Merge} phase.  The sorted data is further supplied to the reduce function to calculate the MEDIAN for each group-by cell (e.g. $<a1,5>$ and $<a2,5>$) where this view will be written to DFS.

 Different to MR (which deletes all the intermediate data after one job), since recomputation requires the base data for update, {\tt HaCube}
 caches the sorted reduce input data in the {\tt Merge} phase for subsequent reuse. This caching operation is conducted while the {\tt Reduce} phase finishes, which  guarantees the atomicity of the operation -
if the reduce task fails, the data will not be written to
the local store. Meanwhile, the scheduling information is recorded.

A view update job is launched when $\Delta \mathbb{D}$ is added for view updates. Intuitively, this job conducts a
cube materialization job using the {\tt CubeGen} algorithm based on $\Delta \mathbb{D}$. It
differs from the first materialization job in
the scheduling and the {\tt Merge} phase.
For task scheduling, instead of randomly distributing the tasks
to reducer nodes, it distributes the tasks
according to the scheduling history from the first materialization job to guarantee that the same tasks are processed
at the same reducer.
%
For instance,
the partitions of cuboid $A$
($\Delta P_{0}\_0$ and $\Delta P_{0}\_1$) are scheduled to the same node running reducer 0 as shown in Fig. \ref{fig:recomputation}(b). In the {\tt Merge} phase, since the base data is already cached in the local store, {\tt HaCube} merges the delta partitions with the cached base
data  from the
local store.
Recall that the cached data is the sorted reduce input data
from the previous job, and so it has the same format as the delta
partition. Thus, it can be treated as a local partition
and a global merge-sort is further performed. Then the sorted data will be supplied to the reduce function for recalculation in the {\tt Reduce} phase. When the {\tt Reduce} phase finishes, the local store is updated with both the base and delta data (becoming an updated base
dataset) for further view update use.

Compared to MR, {\tt HaCube} does not need to reload the base data from DFS and shuffle them from mappers to reducers for recomputation. This significantly reduces the data read/shuffle overheads. Another implementation optimization is proposed to minimize the data caching overhead. To cache the data to the local store, it is expensive to push the data to the local store, as this would incur much overhead of moving a large amount of data. Based on the observation that the intermediate sorted data are maintained in temporary files in the local disk in each reducer, {\tt HaCube} simply registers the file locations to the local store other than moving them. Note that the traditional MR would delete these temporary files once one job finishes. As we shall see, the experimental study shows that there is almost no overhead added for caching the data with this optimization.

%
%

\textbf{Incremental Computation}   {\tt HaCube} adopts a
{\tt Map-Reduce-Re- fresh} (MRR) paradigm for incremental computation.
Intuitively, different to MR in the first materialization job, it triggers to invoke a {\tt Refresh} phase after the {\tt Reduce} phase, to cache the view $\mathbb {V}$ to the local store for further reuse.
For instance, Fig. \ref{fig:recomputation} (c) shows an example of calculating cuboid $A$ for SUM in reducer 0. In this job, $\mathbb{V}$ ($<a1,17>$ and $<a2,16>$) is cached to the local store in the {\tt Refresh} phase. In addition, this scheduling information is also recorded.

When $\Delta \mathbb{D}$ is added for view updates, {\tt HaCube} conducts both the propagate and refresh steps in one view update job, as $\mathbb{V}$ is already cached in the reducer node. This view update job in {\tt HaCube} also executes in a MRR paradigm where MR (Map-Reduce) phases obtain $\Delta \mathbb{V}$ based on $\Delta \mathbb{D}$ (propagate step) and the {\tt Refresh} phase merges $\Delta \mathbb{V}$ with $\mathbb{V}$ locally (refresh step). Intuitively, this can be achieved by running the {\tt CubeGen} algorithm on $\Delta \mathbb{D}$ using the same scheduling plan as the previous materialization job. Meanwhile, the cached views in the local store will be updated with the latest ones. For instance, in Fig. \ref{fig:recomputation} (d), the {\tt Reduce} phase calculates the $\Delta \mathbb{V}$ ($<a1,6>$ and $<a2,4>$) based on $\Delta \mathbb{D}$.
In the {\tt Refresh} phase, the updated view ($<a1, 23>$ and $<a2,20>$) is obtained by merging $\Delta \mathbb{V}$ with $\mathbb{V}$ ($<a1,17>$ and $<a2,16>$) cached in the local store.

%

Different to MR, {\tt HaCube} is able to finish the incremental computation in one job where there is no need to reload and shuffle the delta views and old views among DFS and the cluster during the propagate and refresh steps. This provides an efficient view update using the incremental computation by removing much overheads.

\section{Other Issues}
\label{sec:faulttolerance}

\begin{figure*}[htb]
\centering
  \begin{minipage}[t]{.45\textwidth}
    \centering
      \epsfig{file=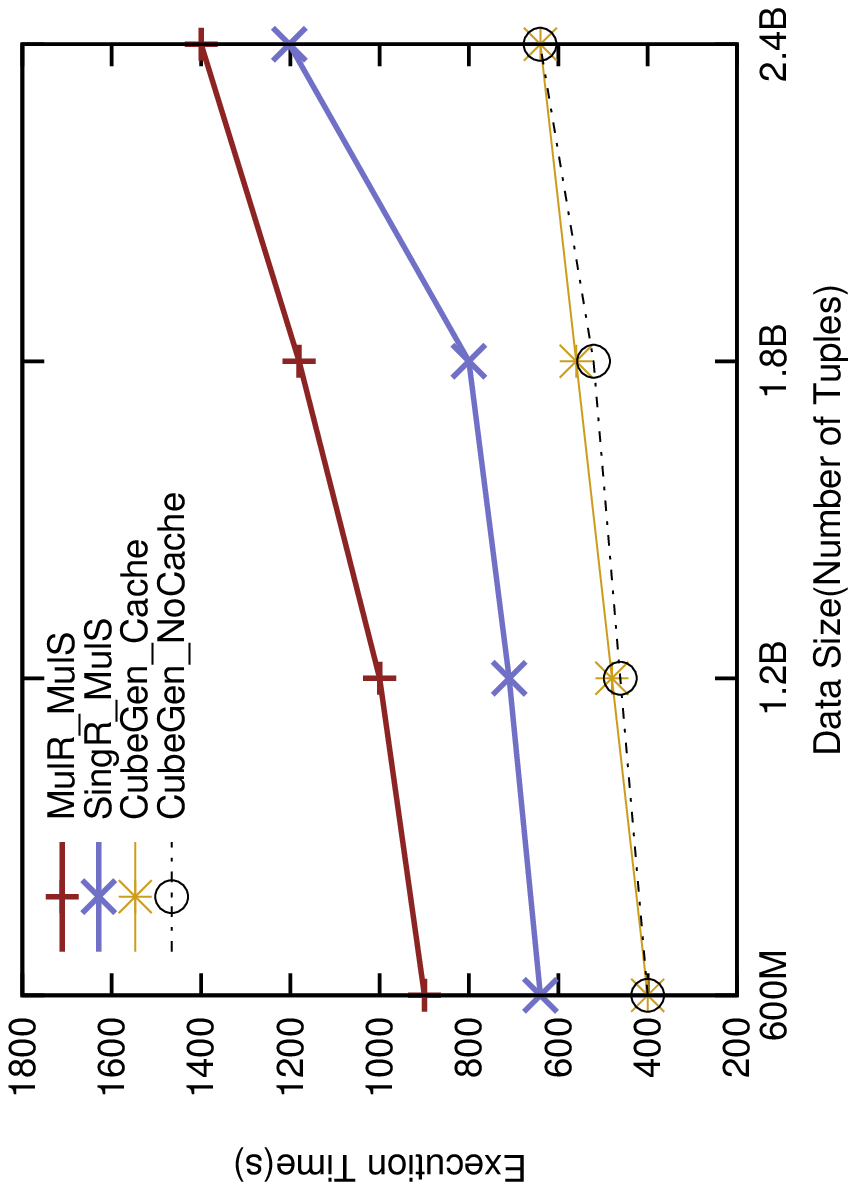,height=7cm, width=4.5cm, angle=270}\\
      (a) MEDIAN
       \end{minipage}
      \begin{minipage}[t]{.45\textwidth}
   \centering
      \epsfig{file=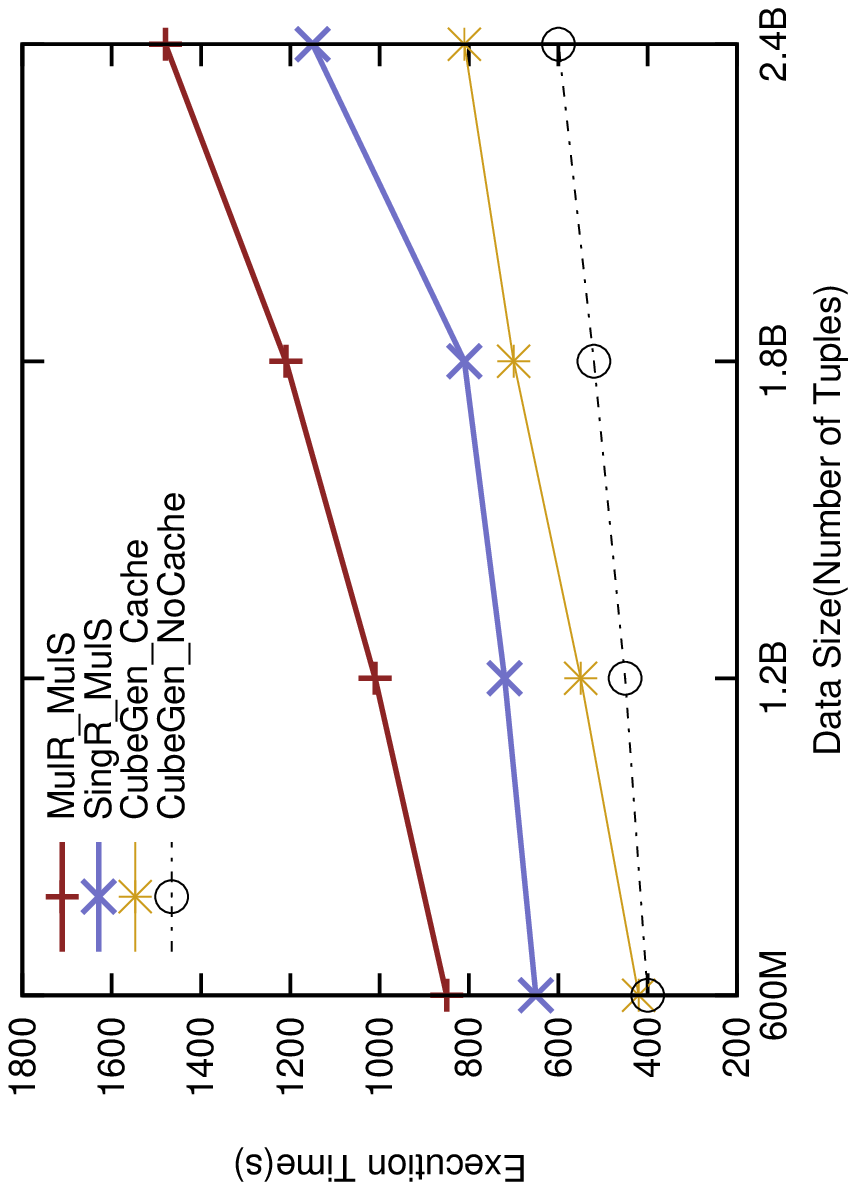,height=7cm, width=4.5cm, angle=270}\\
      (b) SUM
  \end{minipage}

 \caption{\emph{CubeGen} Performance Evaluation for Cube Materialization}
  \label{fig:excubebuilding}
\end{figure*}
\subsection{Fault Tolerance}

Since {\tt HaCube} is built on the MR framework, it preserves the fault-tolerance mechanisms of the MR framework. For instance, the data is replicated in DFS and thus is safe when nodes fail. When a map task fails, the framework schedules the task on another free mapper in the system instead of restarting the whole job.

In addition, we provide an additional fault tolerance strategy to guarantee data availability in the reducer nodes in {\tt HaCube}. The caching mechanism plays an important role in improving the efficiency of data cube analysis. It is important to make sure that the cached data in the reducer node is accessible
when a subsequent job arrives. We handle two kinds of failures in the reducer nodes, including the recoverable and the unrecoverable reducer failures.

\emph{\textbf{Recoverable Failures:}} Recoverable reducer failures
include the task failure and reducer node failure. These failures
can be recovered once the corresponding failed task or node is restarted.
When the task fails in the reducer node, the scheduler kills the task and reschedules it. If the job does not need to use the data in the local store, it will be scheduled to any reducer node. Otherwise, it is scheduled to the same node for data locality. The local store is in a persistent local file system on the reducer node. Thus, after restarting this task, the data is still readable. Similar to the reduce task failure, if the reducer node fails, the data is still accessible after the node is restarted.

\emph{\textbf{Unrecoverable Failures:}} Unrecoverable reducer node failures happen when the reducer node is totally corrupted and not usable at all. In this case, the data in the local store will be lost. To handle this failure, alternative recovery strategies can be adopted for
incremental computation and recomputation.

For incremental computation, recovery is straightforward.
This is because the views from previous jobs cached at the local store
are also stored in the DFS. Thus, when the reducer node is corrupted,
the views can be easily recovered from the DFS.

Under recomputation, the local store caches the sorted intermediate
reduce input data from the merge phase. To handle node failures,
{\tt HaCube} adopts a \textbf{lazy checkpointing} strategy - a snapshot
of the local store is stored to the DFS periodically.
For cube analysis, if we make a snapshot of the cached data after each
view update, it provides the fastest recovery. This is to ensure that data can be directly recovered from the previous view update stage.
However, it is costly to perform checkpointing for each update.

On the other hand, if no snapshots are taken, once a node fails,
we have to recompute it from scratch which is also computationally expensive.
Instead, we advocate an intermediate solution that takes
a snapshot after every $s$ view updates where $s$ can be set
by the users according to the view
update and computer failure frequencies in their cluster.
With such a lazy checkpointing scheme, if a failure happens, the system
can recover by using the most recent snapshot and the new delta
data added after the last checkpointing. Thus, {\tt HaCube} only needs to store the latest snapshot and the data after the snapshot instead of storing all the
base data from the beginning.

\subsection{Storage Cost Discussion}

We argue that {\tt HaCube}'s storage costs are acceptable.

\begin{itemize}
  \item First, even though {\tt HaCube} needs to cache extra data in the local store,
the base data can be deleted from the file system when there is no need to maintain them for other purposes.
The cached data in the local store is sufficient to facilitate view updates
when new data are added.
  \item Second, we can also reduce the number of replicas
stored in the DFS accordingly. The data cached in the local store can
essentially be viewed as one replicated dataset.
  \item Third, {\tt HaCube} only needs to cache one copy of the dataset for different measures in each computation model. Recall that all the measures can be processed together. Thus, for all measures issuing recomputation, the cached sorted raw data is able to serve all of them. For all measures issuing incremental computation, the cached view data can be stored together to reduce storage overhead. For instance, assume that both SUM and MAX need to be calculated, we can store these two views together in the format of $<$dimension attributes, SUM, MAX$>$ instead of maintaining them independently.
\end{itemize}


\section{Performance Evaluation}
\label{sec:experiment}

%
We evaluate {\tt HaCube} on the Longhorn Hadoop cluster in TACC (Texas Advanced Computing Center) \cite{tacc:longhorn}. Each node consists of 2 Intel Nehalem quad-core processors (8 cores) and 48GB memory. By default, the number of nodes used is 35 (and 280 cores).

We perform our studies on the classical dataset generated by TPC-D benchmark generators \cite{tpch}. The TPC-D benchmark offers a rich environment representative of many decision support systems. We study the cube views on the fact table, $lineitem$ in the benchmark.
The attributes $l\_partkey$, $l\_orderkey$, $l\_suppkey$ and $l\_shipdate$ are used as the dimensions and the $l\_quantity$ as the measure. We choose MEDIAN and SUM as the representative functions for evaluation. We report the result based on the average execution time of three runs in each experiment.

\subsection{Cube Materialization Evaluation}

\textbf{Baseline Algorithms} To study the benefit of the optimizations adopted in {\tt CubeGen}, we design two corresponding baseline algorithms to study each of them including {\tt MulR$\_$MulS} (compute each cuboid using one MR job) and {\tt SingR$\_$MulS} (compute all the cuboids using one MR job without batching them), which are widely used for cube computations in MR. {\tt MulR$\_$MulS} is used to study the benefit of removing multiple data read overheads. {\tt SingR $\_$MulS} is used to study the benefit of sharing the shuffle and computation through batch processing.

In the following set of experiments, we vary the data size
from 600M (Million) to 2.4B (Billion) tuples.We study two versions of the {\tt CubeGen} algorithm where {\tt CubeGen} {\tt $\_$Cache} caches the data and {\tt CubeGen$\_$NoCache} does not. This provides insights into the overhead of caching the data to the local store.

\subsubsection{Efficiency Evaluation} We first evaluate the performance improvement of {\tt CubeGen} for cube materialization.
Fig. \ref{fig:excubebuilding} (a) and (b)
show the execution time of all four algorithms for MEDIAN and SUM respectively. As expected, for both MEDIAN and SUM, our
{\tt CubeGen}-based algorithms are 2.2X and 1.6X faster than {\tt MulR$\_$MulS} and {\tt SingR$\_$MulS} on average respectively. This indicates that computing the entire cube in one MR job reduces the overheads significantly compared to the case where multiple MR jobs were issued which requires reading data multiple times. In addition, it also demonstrates that batch processing highly reduces the size of intermediate data which can consequently minimize the overheads of data sorting, shuffling as well as computing.

\begin{figure}[b]
    \centerline{\psfig{file=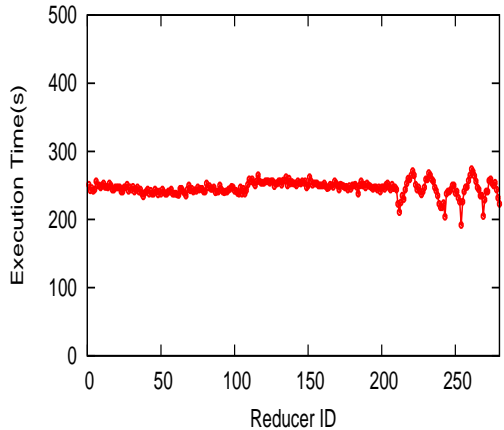,height=6cm, width=7cm} }
    \caption{The load balancing on 280 reducers}
    \label{fig:loadbalancing}
\end{figure}
\subsubsection{Impact of Caching Data} Figure \ref{fig:excubebuilding} (a) and (b) also depict the impact of caching data.
For MEDIAN, we can see that the execution time of the {\tt CubeGen$\_$Cache} algorithm is almost the same as {\tt CubeGen$\_$NoCache} as shown in Fig. \ref{fig:excubebuilding} (a).
This confirms that our optimization to cache the data through file registration
instead of actual data movement does not cause much overhead. For SUM, we observe that {\tt CubeGen$\_$Cache} performs worse than {\tt CubeGen$\_$NoCache}.
This is not surprising as the former needs to write an extra
view to the local file system. However, even though
{\tt CubeGen$\_$Cache} incurs
around 16\% overhead to cache the view,
as we will see later,
it is superior to {\tt CubeGen$\_$NoCache}
when it comes to view updates.

\subsubsection{Load Balancing}

We next show how the $LBCCC$ load balancing scheme works.
The $CCC$ learning job
is conducted using 2 machines and 1GB testing data generated by the benchmark generator.
Then each reducer execution time is recorded to generate a load balancing plan for the {\tt Cube-Gen} algorithm.

We observe that the $LBCCC$ scheme is able to balance the load very well in {\tt CubeGen}.
Fig. \ref{fig:loadbalancing} shows the load situation at each reducer when {\tt CubeGen$\_$NoCache} processes 600M tuples.
We record the {\tt Reduce} phase execution time of each reducer among all
the 280 reducers. We find that 95\% of the reducers complete
their processing within a 10-second difference in execution time.

For the remaining 5\% of the reducers,
as shown in the tail part of the execution time line in
Fig. \ref{fig:loadbalancing}, they take 35 seconds more or less than
the others. This may be caused by the dimension data hash code skew.
We find out that reducers 211 to 280 are assigned to process
the same batch. Recall that within these 69 reducers, the data is
hash partitioned to each reducer.
Thus, if the hash codes of partition attributes are skewed, some reducers will
get more data than others. However, we can see that the average execution
time of these 69 reducers is almost the same as the others which
confirms that our $LBCCC$ does provide each batch the right number of
computation resources. One possible solution to handle
this skew challenge is to adopt the partitioning
mechanisms such as range partitioning to better allocate the
data evenly.

\begin{figure}[h]
    \centerline{\psfig{file=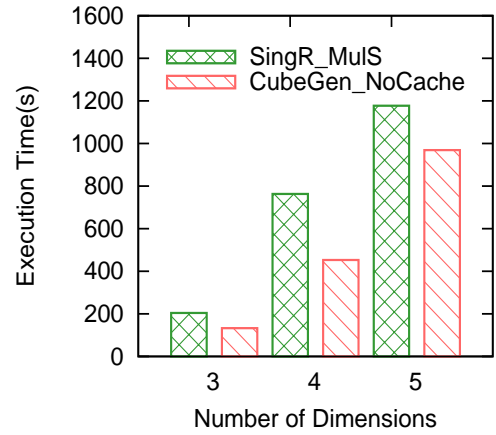,height=6cm, width=7cm} }
    \caption{Impact of Number of Dimensions}
    \label{fig:impactofdimensions}
\end{figure}

\subsubsection{Impact of Number of Dimensions}

We further analyze the impact for cube materialization while varying the number of dimensions from 3 to 5. Our current dataset has 4 dimensions: $l\_partkey$, $l\_orderkey$, $l\_s$ $uppkey$ and $l\_shipdate$. To generate a 3-dimension dataset, we generate the data by removing the $l\_shipdate$ from current dataset. While for the 5-dimension dataset, we generate the data by adding another dimension, $l\_recei- ptdate$ to current dataset.

Figure \ref{fig:impactofdimensions} shows the execution time of {\tt SingR$\_$MulS} and {\tt CubeGen} {\tt $\_$NoCache} for SUM on 600M tuples. Not surprisingly, increasing the number of dimensions increases the cube building time. The results show that {\tt CubeGen$\_$NoCache} outperforms {\tt SingR$\_$MulS} in all these three cases.

\begin{figure*}[t]
\centering
  \begin{minipage}[p]{.55\textwidth}
     \centering
      \epsfig{file=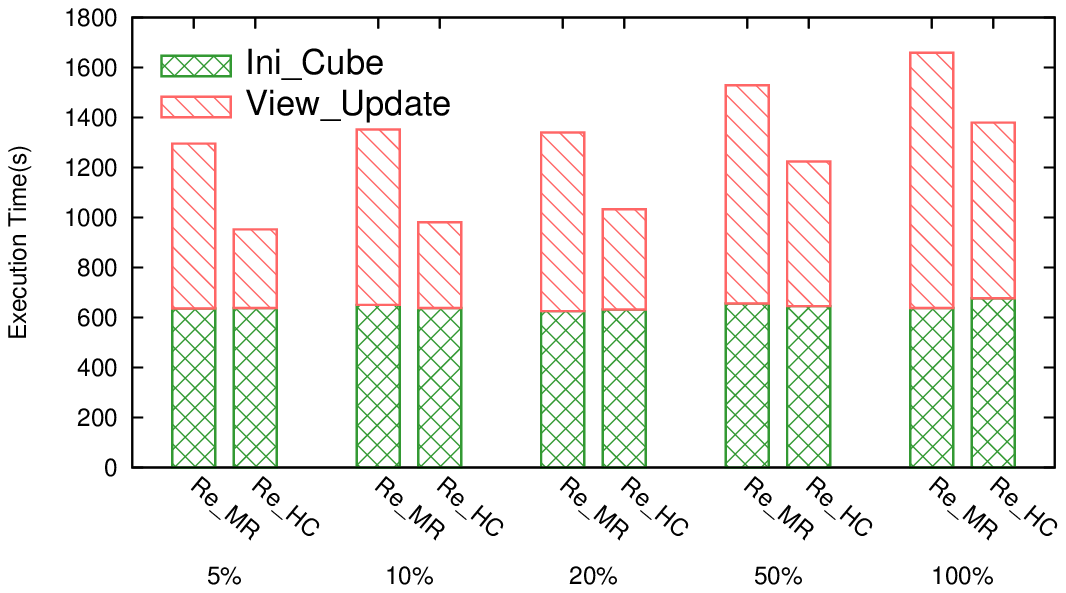,height=5.3cm, width=8cm}\\
      (a) Maintenance: MEDIAN
       \end{minipage}
    \begin{minipage}[p]{.35\textwidth}
   \centering
      \epsfig{file=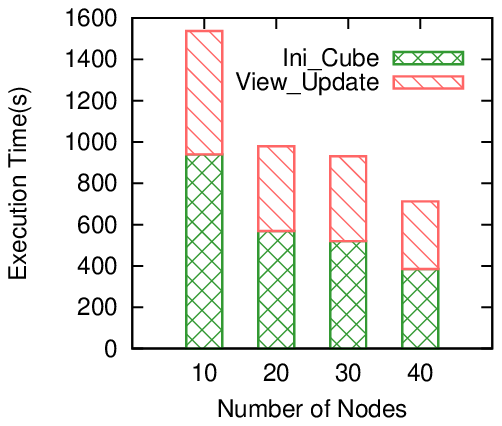,height=5.3cm, width=6cm}\\
      (b) Parallelism: MEDIAN
  \end{minipage}
        \begin{minipage}[t]{.55\textwidth}
   \centering
      \epsfig{file=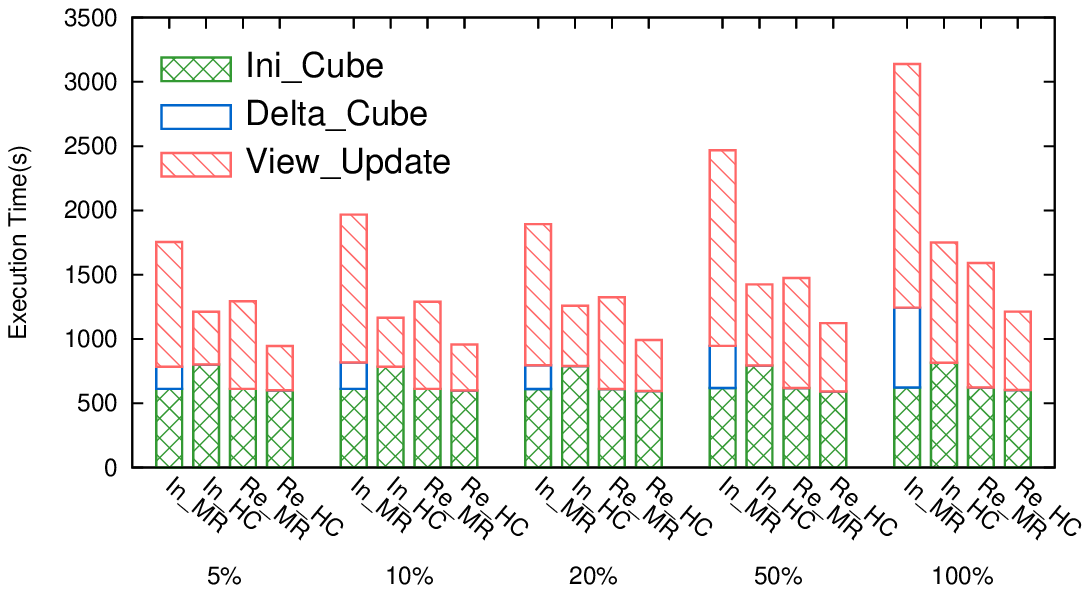,height=5.4cm, width=8cm}\\
      (c) Maintenance: SUM
  \end{minipage}
    \begin{minipage}[t]{.35\textwidth}
   \centering
      \epsfig{file=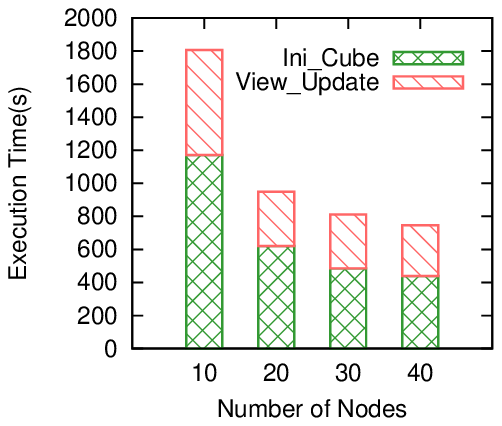,height=5.4cm, width=6cm}\\
      (d) Parallelism: SUM
  \end{minipage}

 \caption{\emph{CubeGen} Performance Evaluation for View Maintenance and Impact of Parallelism}
  \label{fig:viewupdate}
\end{figure*}

\subsection{View Maintenance Evaluation}
\label{subsec:exviewupdate}

\subsubsection{Efficiency Evaluation} We next study the efficiency of performing the view maintenance in {\tt HaCube} compared with Hadoop. We fix $\mathbb{D}$ with 2.4B tuples in the first cube materialization job and vary the size of $\Delta \mathbb{D}$ from 5\% to 100\% of $\mathbb{D}$ for view updates.

Figure \ref{fig:viewupdate} (a) shows the execution time for
both the cube materialization ({\tt Ini$\_$Cube}) and
the view updates ({\tt View$\_$Update}) for MEDIAN.
In this set of experiments, we adopt
recomputation for view updates of MEDIAN using MR ({\tt Re$\_$MR})
and {\tt HaCube} ({\tt Re$\_$HC}).
The result shows that {\tt Re$\_$MR} is 2X and 1.4X faster than {\tt Re$\_$MR}, when $\Delta \mathbb{D}$ is 5\% and 100\% respectively.
The gains come from avoiding reloading and reshuffling $\mathbb{D}$
among the cluster. Thus, the larger $\mathbb{D}$ is, the bigger the
benefit will be.

%

Figure \ref{fig:viewupdate} (c) depicts the result for SUM.
As view updates for SUM can either
be done by incremental computation or recomputation,
we evaluate both approaches.
In Fig. \ref{fig:viewupdate} (c), {\tt In$\_$MR} and {\tt Re$\_$MR} (resp. {\tt In\_HC} and {\tt Re$\_$HC}) are MR (resp. {\tt HaCube}) -based methods using
incremental computation and recomputation respectively.

{\tt In$\_$MR} and {\tt Re$\_$MR} are implemented in the way described in Section \ref{subsec:vminmr}. In {\tt In$\_$MR},
{\tt Delta$\_$Cube} (in the figure) corresponds to
the propagate job to generate the delta view
and {\tt View\_Update} is the refresh job.
The result shows that, for incremental computation, {\tt In\_HC} is 2.8X and 2.2X faster than {\tt In$\_$MR} when $\Delta \mathbb{D}$ is in 5\% and 100\% as shown in Fig. \ref{fig:viewupdate} (c).
For recomputation, {\tt Re$\_$HC} is about 2.1X and 1.4X faster than the {\tt Re$\_$MR} when the $\Delta \mathbb{D}$  is in 5\% and 100\% as shown in Fig. \ref{fig:viewupdate} (c). This indicates that {\tt HaCube} has significant performance improvement compared to MR for the view update for both recomputation and incremental computation.

We observe that incremental computation
performs worse than recomputation in both MR and {\tt HaCube}.
While this seems counter-intuitive, our investigation reveals that
DFS does not provide indexing support; as such, in incremental computation,
the entire view which is much larger than the base data (in our experiments)
has to be accessed. Another insight we gain is the smaller the $\Delta \mathbb{D}$ is, the more effective {\tt HaCube} is.
As future work, we will integrate more existing techniques (e.g. indexing) in DBMS into {\tt HaCube}, which will further improve the view update performance.

%
%

\subsubsection{Impact of Parallelism} We further analyze the impact of parallelism on {\tt HaCube} for both cube materialization and view update while varying the number of nodes from 10 to 40. The experiments use $\mathbb{D}$ with 600M tuples and $\Delta \mathbb{D}$ in 20\% of $\mathbb{D}$ .

Figures \ref{fig:viewupdate} (b) and (d) report the execution time for MEDIAN and SUM. Note that, in this experiment, incremental computation is
used for SUM. We observe that for both recomputation and incremental computation, {\tt HaCube} scales linearly on the testing data set from 10 to 20 nodes, where the execution time almost reduces to half when the resources are doubled. From 20 nodes to 40 nodes, the benefit of parallelism decreases a little bit. This is reasonable, since the entire overheads include two parts, the setup of the framework and the cube computation; the former one may reduce the benefits of increasing the computation resources while cube computation cost is not big enough.

\section{Related Work}
\label{sec:relatedwork}

Much research has been devoted to the problem of data cube analysis \cite{gray:datacube}.
A lot of studies have investigated efficient cube materialization \cite{xin:starcubing}\cite{xin:tkde}\cite{beyer:botoomup}\cite{zhao:arraycube} and view maintenance \cite{lee:incremain}\cite{palpanas:increnondis}. Three classic cube computation approaches (Top-down \cite{zhao:arraycube} , Bottom-Up \cite {beyer:botoomup} and Hybrid \cite{xin:tkde})
have been well studied to share computation among the lattice in a centralized system or a small cluster environment. Different to these approaches, {\tt CubeGen} adopts a new strategy to partition and batch the cuboids according to their prefix order to tackle the new challenges brought by MR. It utilizes the sorting feature better in MR-like systems such that no extra sorting needed during materialization.

%
%

Existing works \cite{sergy:closedcube} \cite{you:closedcube} have
adopted MR to build closed cubes for
algebraic measures. However, both  of these works do not
provide a generic algorithm that can balance the load to
materialize the cube for different measures. Nandi et al. \cite{nandi:mrcube} provided a solution to a special case during the cube computation under MR where one reducer
gets the ``hot spot'' group-by cell with a large number of tuples.
This complements our work and can be employed to handle
such a case when one reducer is overloaded. We note that {\tt HaCube}
is able to support all these existing cube materialization algorithms.
Last but not the least, none of these aforementioned works have
developed any techniques for view maintenance. This is, to the
best of our knowledge, the first work to address the data cube view maintenance
in MR-like systems.

Our work is also related to the problem of incremental computations. Existing works \cite{jorg:clouddb}\cite{bhatotia:incoop}\cite{lammel:mrdeltas} have studied some techniques for incremental computations for single operators in MR.  HaLoop \cite{bu:haloop} is designed to support
iterative operations through a similar caching mechanism which is used for different purposes under a different application context. Restore \cite{elghandour:restore} also shares the similar spirit to keep the intermediate results (either the output of one MR job or the data operated within one job) to DFS in a workflow and reuse them in the future. 
For data cube computation, as the size of intermediate results is large,
{\tt HaCube} adopts a different data caching mechanism to guarantee the data locality that the cached data can be directly used from local store.
This avoids the overhead incurred by Restore in reloading and reshuffling data
from DFS.
Furthermore, none of these existing works provide
explicit support and techniques for data cube analysis
under OLAP and data warehousing semantics.

\section{Conclusion}
\label{sec:conclusion}


It is of critical importance to develop new scalable and efficient data cube analysis systems on a big cluster with low-cost commodity machines to face the challenges brought by the large-scale of data, to provide a better query response and decision making support. In this paper, we made one step towards
developing such a system, {\tt HaCube} an extension of MapReduce, by integrating the good features from both MapReduce (e.g. Scalability) and parallel DBMS (e.g. Local Store). We showed how to batch and share the computations to salvage partial work done by facilitating the features in MapReduce-like systems towards an efficient cube materialization. We also proposed one load balancing strategy such that the load in each reducer can be balanced. Furthermore, we demonstrated how {\tt HaCube} supports an efficient view maintenance by facilitating the extension leveraging a new computation paradigm. The experimental results showed that our proposed cube materialization approach is at least 1.6X to 2.2X faster than the naive algorithms and {\tt HaCube} performs at least 2.2X to 2.8X faster than Hadoop for view maintenance. We expect {\tt HaCube} to further improve the performance by integrating more techniques from DBMS, such as indexing techniques.
%
%
%
%
%
%
%


\bibliographystyle{plain}
\bibliography{CoRR13}  

\end{document}